\newcommand{\R}{\mathbb{R}}
\renewcommand{\S}{\mathcal{S}}
\renewcommand{\P}{\mathcal{P}}
\newcommand{\SET}[1]{\left\{#1\right\}}
\newcommand{\DOT}{\,.}
\newcommand{\KOMMA}{\,,}
\newcommand{\WHERE}{\,\colon\,}
\newcommand{\ceil}[1]{\left\lceil #1 \right\rceil}
\newcommand{\floor}[1]{\left\lfloor #1 \right\rfloor}
\renewcommand{\l}{\lambda}
\newcommand{\ra}{\rightarrow}
\newcommand{\OMEGA}[1]{\Omega\left(#1\right)}
\newcommand{\THETA}[1]{\Theta\left(#1\right)}
\newcommand{\LOG}[2][]{\log_{#1}\left(#2\right)}
\newcommand{\MIN}[1]{\min\SET{#1}}
\newcommand{\LIST}[3][1]{#2_{#1}, \ldots, #2_{#3}}
\newcommand{\SLIST}[3][1]{#2 {#1}, \ldots, #2 {#3}}
\newcommand{\NLIST}[2][1]{\SLIST[#1]{}{#2}}
\renewcommand{\d}{{d_h}}
\newcommand{\dt}{{d_h}}
\newtheorem{theorem}{Theorem}
\newtheorem{lemma}[theorem]{Lemma}
\newtheorem{corollary}[theorem]{Corollary}
\title{Lower Bounds for the Smoothed Number of\\ Pareto optimal Solutions}
\author{Tobias Brunsch\vspace{-1.0mm}\\
		\footnotesize Department of Computer Science\vspace{-1.2mm}\\
		\footnotesize University of Bonn, Germany\vspace{-1.2mm}\\
        \footnotesize \texttt{brunsch@cs.uni-bonn.de}
   \and
        Heiko R\"oglin\vspace{-1.0mm}\\
		\footnotesize Department of Computer Science\vspace{-1.2mm}\\
		\footnotesize University of Bonn, Germany\vspace{-1.2mm}\\
        \footnotesize \texttt{heiko@roeglin.org}}
\begin{document}

\maketitle

\begin{abstract}
In 2009, R\"{o}glin and Teng showed that the smoothed number of Pareto optimal
solutions of linear multi-criteria optimization problems is polynomially bounded
in the number~$n$ of variables and the maximum density~$\phi$ of the semi-random
input model for any fixed number of objective functions. Their bound is, however,
not very practical because the exponents grow exponentially in the number~$d+1$ of
objective functions. In a recent breakthrough, Moitra and O'Donnell improved
this bound significantly to $O \big( n^{2d} \phi^{d(d+1)/2} \big)$.

An ``intriguing problem'', which Moitra and O'Donnell formulate in their
paper, is how much further this bound can be improved. The previous lower
bounds do not exclude the possibility of a polynomial upper bound whose
degree does not depend on~$d$. In this paper we
resolve this question by constructing a class of instances with $\Omega ( (n \phi)^{(d-\LOG{d}) \cdot (1-\THETA{1/\phi})} )$ Pareto
optimal solutions in expectation. For the bi-criteria case we present a higher lower
bound of $\Omega ( n^2 \phi^{1 - \THETA{1/\phi}} )$,
which almost matches the known upper bound of $O( n^2 \phi)$.
\end{abstract}

\section{Introduction}

In multi-criteria optimization problems we are given several objectives and aim at finding a solution that is simultaneously optimal in all of them. In most cases the objectives are conflicting and no such solution exists. The most popular way to deal with this problem is based on the following simple observation. If a solution is \emph{dominated} by another solution, i.e.\ it is worse than the other solution in at least one objective and not better in the others, then this solution does not have to be considered for our optimization problem. All solutions that are not dominated are called \emph{Pareto optimal}, and the set of these solutions is called \emph{Pareto set}.

\paragraph{Knapsack Problem with Groups}

Let us consider a variant of the \emph{knapsack problem} which we call \emph{restricted multi-profit knapsack problem}. Here, we have~$n$ objects $\LIST{a}{n}$, each with a weight~$w_i$ and a profit vector~$p_i \in \R^d$ for a positive integer~$d$. By a vector~$s \in \SET{ 0, 1 }^n$ we can describe which object to put into the knapsack. In this variant of the knapsack problem we are additionally given a set~$\S \subseteq \SET{ 0, 1 }^n$ of \emph{solutions} describing all combinations of objects that are allowed. We want to simultaneously minimize the total weight and maximize all total profits of a solution~$s$. Thus, our optimization problem, denoted by $K_\S(\{ \LIST{a}{n} \})$, can be written as
\begin{itemize}
\item[] \textbf{minimize} $\sum \limits_{i=1}^n w_i \cdot s_i$, \text{\,\, and \,\,} \textbf{maximize} $\sum \limits_{i=1}^n (p_i)_j \cdot s_i$ for all~$j = \NLIST{d}$
\item[] \textbf{subject to}~$s$ in the feasible region~$\S$.
\end{itemize}
For~$\S = \SET{ 0, 1 }^n$ we just write $K(\{ \LIST{a}{n} \})$ instead of $K_\S(\{ \LIST{a}{n} \})$.

In this paper we will consider a special case of the optimization problem above where we partition the objects into groups. For each group a set of allowed subgroups is given. Independently of the choice of the objects outside this group we have to decide for one of those subgroups. Hence, the set~$\S$ of solutions is of the form $\S = \prod_{i=1}^k \S_i$ where~$\S_i \subseteq \SET{ 0, 1 }^{n_i}$ and $\sum_{i=1}^k n_i = n$. We refer to this problem as \emph{multi-profit knapsack problem with groups}.

\paragraph{Smoothed Analysis}

For many multi-criteria optimization problems the worst-case size of the Pareto
set is exponential in the number of variables. However, worst-case analysis is
often too pessimistic, whereas average-case analysis assumes a certain
distribution on the input universe, which is usually unknown. \emph{Smoothed
analysis}, introduced by Spielman and Teng~\cite{DBLP:journals/jacm/SpielmanT04}
to explain the efficiency of the simplex algorithm in practice despite its
exponential worst-case running time, is a combination of both approaches.
Like in a worst-case analysis the model of smoothed analysis still considers adverserial
instances. In contrast to the worst-case model, however, these instances are
subsequently slightly perturbed at random, for example by Gaussian noise.
This assumption is made to model that often the input an algorithm gets is subject to imprecise
measurements, rounding errors, or numerical imprecision. 
In a more general model of smoothed analysis, introduced by Beier and
V\"{o}cking~\cite{DBLP:journals/jcss/BeierV04}, the adversary is even allowed to
specify the probability distribution of the random noise. The influence he can
exert is described by a parameter~$\phi$ denoting the maximum density of the
noise.

For the restricted multi-profit knapsack problem we use the following smoothing
model which has also been used by Beier and
V\"{o}cking~\cite{DBLP:journals/jcss/BeierV04}, by Beier, R\"{o}glin, and
V\"{o}cking~\cite{DBLP:conf/ipco/BeierRV07}, by R\"{o}glin and
Teng~\cite{DBLP:conf/focs/RoglinT09}, and by Moitra and
O'Donnell~\cite{MoitraO10}. Given positive integers~$n$ and~$d$ and a real~$\phi
\geq 1$ the adversary can specify a set~$\S \subseteq \SET{ 0, 1 }^n$ of
solutions, arbitrary object weights $\LIST{w}{n}$ and density functions $f_{i,j}
\colon [-1, 1] \ra \R$ such that $f_{i,j} \leq \phi$, $i = \NLIST{n},\ j =
\NLIST{d}$. Now the profits~$(p_i)_j$ are drawn independently according to the
density functions~$f_{i,j}$. The \emph{smoothed number of Pareto optimal
solutions} is the largest expected size of the Pareto set of $K_\S(\{ \LIST{a}{n} \})$
that the adversary can achieve by choosing the set $\S$, the weights $w_i$, and the
probability densities $f_{i,j}\leq\phi$ for the profits~$(p_i)_j$.

\paragraph{Previous Work}

Beier and V\"{o}cking~\cite{DBLP:journals/jcss/BeierV04} showed that for $d = 1$
the expected size of the Pareto set is $O(n^4 \phi)$. Furthermore, they showed a
lower bound of $\OMEGA{n^2}$ if all profits are uniformly drawn from $[0, 1]$.
Later, Beier, R\"{o}glin, and V\"{o}cking~\cite{DBLP:conf/ipco/BeierRV07}
improved the upper bound to $O(n^2 \phi)$ by analyzing the so-called \emph{loser
gap}. R\"{o}glin and Teng~\cite{DBLP:conf/focs/RoglinT09} generalized the notion
of this gap to higher dimensions, i.e.\ $d \geq 2$, and gave the first
polynomial bound in $n$ and $\phi$ for the smoothed number of Pareto optimal
solutions. Furthermore, they were able to bound higher moments. The degree of
the polynomial, however, was $d^{\THETA{d}}$. Recently, Moitra and
O'Donnell~\cite{MoitraO10} showed a bound of $O(n^{2d} \phi^{d(d+1)/2})$,
which is the first polynomial bound for the expected size of the Pareto
set with degree polynomial in $d$. 
An ``intriguing problem'' with which Moitra and O'Donnell conclude their paper is
whether their upper bound could be significantly improved, for example to $f(d, \phi) n^2$.
Moitra and O'Donnell suspect that for constant $\phi$ there should be a lower
bound of $\OMEGA{n^d}$. In this paper we resolve this question almost completely.

\paragraph{Our Contribution}

For $d = 1$ we prove a lower bound $\OMEGA{\MIN{n^2 \phi^{1-\THETA{1/\phi}},
2^{\THETA{n}}}}$. This is the first bound with dependence on $n$ and $\phi$ and
it nearly matches the upper bound $O(\MIN{n^2 \phi, 2^n})$. For $d \geq 2$ we
prove a lower bound $\OMEGA{ \MIN{ (n\phi)^{(d-\LOG[2]{d}) \cdot (1 -
\THETA{1/\phi})}, 2^{\THETA{n}} } }$. This is the first bound for the general
multi-criteria case. Still, there is a significant gap between this lower bound
and the upper bound of $O(\MIN{n^{2d} \phi^{d(d+1)/2},2^n})$  shown by Moitra and O'Donnell, but the exponent of $n$ is
nearly $d - \LOG[2]{d}$. Hence our lower bound is close to the lower bound of $\OMEGA{n^d}$
conjectured by Moitra and O'Donnell.

\section{The Bi-criteria Case}

In this section we present a lower bound for the expected number of Pareto optimal solutions in bi-criteria optimization problems that shows that the upper bound of Beier, R\"{o}glin, and V\"{o}cking \cite{DBLP:conf/ipco/BeierRV07} cannot be significantly improved. To prove this lower bound, we consider a class of instances for a variant of the knapsack problem, in which subsets of items can form groups such that either all items in a group have to be put into the knapsack or none of them.
\begin{theorem}
\label{bi.mainthm}
There is a class of instances for the bi-criteria knapsack problem with groups for which the expected number of Pareto-optimal solutions is lower bounded by
\[ \OMEGA{ \MIN{ n^2 \phi^{1-\Theta(1/\phi)}, 2^{\THETA{n}} } } \KOMMA \]
where~$n$ is the number of objects and~$\phi$ is the maximum density of the profits' probability distributions.
\end{theorem}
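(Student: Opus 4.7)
The plan is to exhibit an explicit family of instances of the bi-criteria knapsack problem with groups, parameterized by $n$ and $\phi$, whose expected Pareto set has size $\OMEGA{n^2 \phi^{1-\THETA{1/\phi}}}$ up to the trivial cap given by the number of feasible solutions. For $\phi=\THETA{1}$ this is the known $\OMEGA{n^2}$ bound of Beier and V\"ocking, which I would recover via uniform profits and weights $w_i = 2^i$. At the other extreme, whenever $\phi$ is so large that $n^2\phi^{1-\THETA{1/\phi}}$ exceeds $2^{cn}$ for some small constant $c>0$, the bound is automatically capped at $2^{\THETA{n}}$ and can be witnessed by profits tightly concentrated around fixed ``binary-expansion'' values, making a constant fraction of all feasible solutions Pareto optimal. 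The interesting regime is $2\le\phi\le 2^{O(n)}$.

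In that regime I would partition the $n$ objects into $k\approx \LOG[2]{\phi}$ groups $\LIST{G}{k}$ of roughly equal size $m=n/k$. Within each $G_i$, the family $\S_i$ of allowed subsets is structured (say, as nested ``prefixes'' of a fixed ordering of $G_i$) so that a choice in $G_i$ is equivalent to picking a level $\ell_i \in \SET{0,1,\ldots,m}$. The weights are assigned with strong scale separation between groups---items of $G_i$ have weights in $[M_i, 2M_i]$ with $M_{i+1}/(n M_i)$ sufficiently large---so that the weight contribution of higher-indexed groups dominates everything below. The profits are drawn independently according to densities bounded by $\phi$ supported on intervals of length $1/\phi$, with midpoints placed so that the typical within-group profit order matches the within-group weight order.

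By the scale separation, the event ``$s=(\LIST{\ell}{k})$ is Pareto optimal'' factorizes (up to a small boundary correction) into the conjunction of $k$ local within-group Pareto-optimality events, each depending only on the random profits of the corresponding group. I then count Pareto optimal solutions by linearity of expectation. I pick a candidate family $\mathcal{F}$ of total size $\OMEGA{n^2\phi}$---for instance, those combinations realizing one of the $\OMEGA{m^2}$ elementary within-group Pareto orderings in a designated pair of ``active'' groups, combined arbitrarily with valid choices in the other groups---and for each $s\in\mathcal{F}$ I lower bound the local Pareto-optimality probability by $1-\THETA{1/\phi}$ per group. Multiplying over the $\THETA{\LOG[2]{\phi}}$ contributing groups yields $(1-\THETA{1/\phi})^{\THETA{\LOG[2]{\phi}}}=\phi^{-\THETA{1/\phi}}$, and summing over $\mathcal{F}$ gives the claimed bound.

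The main obstacle is calibrating the construction so that these steps cohere. The scale separation must be strong enough to produce the clean factorization of the Pareto-optimality event, yet gentle enough that the weight scales still fit inside a profit support normalized to $[-1,1]$ and the profit densities really attain $\phi$. The second delicate point is the precise definition of $\mathcal{F}$ and the computation of the within-group probabilities as a function of where the random profits land inside their $1/\phi$-intervals---this is exactly where the $1-\THETA{1/\phi}$ boundary loss per group arises, and the argument must show that these boundary losses are essentially independent across groups so that they compose multiplicatively. Once these two calibrations are in place, the theorem follows by a direct linearity-of-expectation argument, with the $2^{\THETA{n}}$ cap engaging once $|\mathcal{F}|$ saturates the number of feasible solutions.
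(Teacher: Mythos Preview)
Your proposal diverges from the paper's argument and, as written, has a genuine gap at its center. The paper does \emph{not} use a probabilistic factorization over groups. Instead it takes $n_p=\Theta(n)$ ``base'' objects with uniform profits in $[0,1/\phi]$ (giving the Beier--V\"ocking $\Omega(n_p^2)$ Pareto set) and then adds $n_q$ ``copy'' objects $b_1,\ldots,b_{n_q}$ whose weights and profits are both chosen so large that adjoining $b_i$ \emph{deterministically} doubles the current Pareto set (Lemma~\ref{bi.lemma.copy}). The required profits $q_i$ grow like $\bigl(\tfrac{2\phi-1}{\phi-1}\bigr)^{i}$, so to keep each coordinate in $[0,1]$ each $b_i$ is split into $\lceil m_i\rceil$ copies forming a group with feasible set $\{\vec 0,\vec 1\}$. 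One then chooses $n_q=\big\lfloor \log\phi\,/\,\log\tfrac{2\phi-1}{\phi-1}\big\rfloor$, checks that the total object count after splitting is at most $n$, and reads off $2^{n_q}=\phi^{1/\log_2(2+\Theta(1/\phi))}=\phi^{1-\Theta(1/\phi)}$. No per-group success probability enters; the $\phi^{-\Theta(1/\phi)}$ loss is purely an artifact of the base of the geometric growth of the $q_i$.

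The step in your outline that does the real work---``Pareto optimality factorizes (up to a small boundary correction) into the conjunction of $k$ local within-group events''---is exactly what needs proof and, with only weight-scale separation, is false in general: once weights are lexicographic in $(\ell_k,\ldots,\ell_1)$, a solution is Pareto optimal iff its \emph{total} profit exceeds that of every lexicographically smaller solution, and the profit is a single sum across groups, not a lexicographic quantity. To make the comparison factor through the groups you would also need profit-scale separation, i.e.\ the profit contribution of group $i{+}1$ must dominate the cumulative profit of groups $1,\ldots,i$; but with $m\approx n/\log\phi$ levels per group this forces profits to grow by a factor $\Theta(m)$ per group, which after $k\approx\log\phi$ groups blows the top profits out of $[0,1]$ unless you reintroduce a splitting trick---at which point you are essentially back to two-level groups and the paper's construction. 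Relatedly, your description of $\mathcal F$ and of the $1-\Theta(1/\phi)$ per-group probability is not pinned down enough to yield $|\mathcal F|=\Omega(n^2\phi)$ or to justify independence of the ``boundary losses''. The clean way out is the copy/split mechanism: make each group binary, choose its aggregate profit just above the running total so the doubling is exact, and let the number of achievable doublings---not a success probability---produce the exponent $1-\Theta(1/\phi)$.
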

Note, that Beier, R\"{o}glin, and V\"{o}cking \cite{DBLP:conf/ipco/BeierRV07} proved an upper bound of $O(\MIN{ n^2 \phi, 2^n })$. That is, the exponents of~$n$ and~$\phi$ in the lower and the upper bound are asymptotically the same.

For our construction we use the following lower bound from Beier and V\"{o}cking.
\begin{theorem}[\cite{DBLP:journals/jcss/BeierV04}]
\label{bi.thm.n.squared}
Let $\LIST{a}{n}$ be objects with weights $\SLIST{2^}{n}$ and profits $\LIST{p}{n}$ that are independently and uniformly distributed in~$[0, 1]$. Then, the expected number of Pareto optimal solutions of $K(\{ \LIST{a}{n} \})$ is $\OMEGA{n^2}$.
\end{theorem}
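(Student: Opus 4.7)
The plan is to compute $N_n$, the expected number of Pareto optimal solutions of $K(\{a_1,\ldots,a_n\})$, via a sharp one-step recursion in $n$. Because $w_i = 2^i$, every subset of $\{a_1,\ldots,a_{n-1}\}$ has weight strictly less than $w_n = 2^n$ and every subset containing $a_n$ has weight at least $2^n$. Hence every Pareto optimum of the $(n-1)$-item problem remains Pareto optimal after adding $a_n$, and the \emph{new} Pareto optima are exactly the sets $\{n\} \cup U$, where $U \subseteq \{1,\ldots,n-1\}$ is Pareto optimal in the subproblem \emph{and} $p_n + p(U) > P_{n-1} := \sum_{i<n} p_i$.

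Write $U = \{1,\ldots,n-1\} \setminus I$ with $I = \{i_1 < \cdots < i_s\}$, so $\Delta(U) := P_{n-1} - p(U) = \sum_{t=1}^{s} p_{i_t}$. The structural step is to reduce the joint event ``$U$ is Pareto optimal and $p_n > \Delta(U)$'' to the conjunction
\[
p_k \;>\; p_{i_1} + p_{i_2} + \cdots + p_{i_t}
\qquad \text{for every } t \in \{1,\ldots,s\} \text{ and every } k \in \{i_t+1,\ldots,i_{t+1}-1\},
\]
where we set $i_{s+1} := n+1$ so that the condition at $t=s$, $k=n$ is precisely $p_n > \Delta(U)$. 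The key combinatorial observation is that for any competitor $T = \{1,\ldots,n-1\} \setminus A$ of $U$ with $w(T) < w(U)$, uniqueness of binary representation forces $\max(A \setminus I) > \max(I \setminus A)$; writing $k^* := \max(A \setminus I)$, one sees that $A$ must contain $k^*$ together with every $i_t > k^*$. The tightest constraint $p(T) < p(U)$ is therefore attained when $A = \{k^*\} \cup \{i_t : i_t > k^*\}$, which yields exactly the displayed inequality with $t$ equal to the largest index satisfying $i_t < k^*$.

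Given this reduction, the conditions involve disjoint sets of profit variables $p_k$, so conditioning on $p_{i_1},\ldots,p_{i_s}$ makes them independent; the conditional probability factorises as $\prod_{t=1}^{s} (1 - S_t)^{i_{t+1} - i_t - 1}$ with $S_t := p_{i_1} + \cdots + p_{i_t}$. The substitution $y_t := 1 - S_t$ converts the resulting integral to $\int_{1 \geq y_1 \geq \cdots \geq y_s \geq 0} \prod_t y_t^{i_{t+1}-i_t-1}\,dy_s \cdots dy_1$, and an iterated beta-integral computation collapses the product $\prod_t (b_t)^{-1}$ with $b_t = \sum_{l \geq t}(i_{l+1}-i_l) = n+1-i_t$ to give
\[
\Pr\bigl[U \text{ is Pareto optimal and } p_n > \Delta(U)\bigr] \;=\; \prod_{t=1}^{s} \frac{1}{n - i_t + 1}.
\]
Summing over all $I \subseteq \{1,\ldots,n-1\}$ via the identity $\sum_I \prod_{i \in I} x_i = \prod_i (1+x_i)$ telescopes to
\[
N_n - N_{n-1} \;=\; \prod_{i=1}^{n-1}\!\Bigl(1 + \tfrac{1}{n-i+1}\Bigr) \;=\; \prod_{k=2}^{n} \tfrac{k+1}{k} \;=\; \tfrac{n+1}{2},
\]
so $N_n = 1 + \sum_{j=1}^{n}(j+1)/2 = \THETA{n^2}$, which is $\OMEGA{n^2}$ as claimed.

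The main obstacle I expect is the structural reduction in the second paragraph: rigorously checking that enlarging the ``canonical'' competitor $A_0 := \{k^*\} \cup \{i_t : i_t > k^*\}$ by additional items below $k^*$ only loosens the profit inequality (so that no additional binding condition is introduced), and that the resulting index ranges $\{i_t+1,\ldots,i_{t+1}-1\}$ are pairwise disjoint (so that the conditional independence used in the beta-integral actually holds). Once that combinatorial lemma is secured, the iterated integration and the generating-function sum are routine.
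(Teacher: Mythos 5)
Your proposal is correct, but there is nothing in the paper to compare it against: Theorem~\ref{bi.thm.n.squared} is not proved here at all, it is imported by citation from Beier and V\"ocking, so what you have written is a genuinely self-contained (and in fact stronger) argument -- your recursion $N_n-N_{n-1}=(n+1)/2$ yields the exact value $N_n=1+\tfrac{n(n+3)}{4}$ rather than just $\Omega(n^2)$. I verified the steps. The identification of the new Pareto optima as the sets $\{n\}\cup U$ with $U$ Pareto optimal in the subproblem and $p_n+p(U)>P_{n-1}$ is right, since every subset avoiding $a_n$ is lighter than every subset containing it and the heaviest such competitor has profit $P_{n-1}$ almost surely. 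The combinatorial lemma you flag as the main obstacle does hold: for a competitor $T=\{1,\ldots,n-1\}\setminus A$ with $w(T)<w(U)$, comparison of binary expansions gives $k^*:=\max(A\setminus I)>\max(I\setminus A)$, hence $A\supseteq A_0:=\{k^*\}\cup\{i\in I: i>k^*\}$ and $A\setminus A_0\subseteq\{1,\ldots,k^*-1\}$; nonnegativity of the profits gives $p(A)\ge p(A_0)$, and $A_0$ is itself a feasible competitor, so the binding constraints are exactly $p_{k^*}>\sum_{i\in I,\,i<k^*}p_i$ for $k^*\notin I$ (trivial for $k^*<i_1$), which is your displayed system; the ranges $\{i_t+1,\ldots,i_{t+1}-1\}$ are pairwise disjoint and disjoint from $I$, so the conditional factorization is legitimate. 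The only point you should make explicit is that the conditional probability is really $\prod_t\bigl((1-S_t)^+\bigr)^{c_t}$ and that restricting the integration to $y_s\ge 0$ is justified because $c_s=n-i_s\ge 1$, so the event is impossible whenever $S_s>1$; with that remark the iterated integral giving $\prod_{t=1}^{s}\frac{1}{n-i_t+1}$, the product identity over all $I$, and the telescoping to $\frac{n+1}{2}$ are all correct (and agree with direct checks for small $n$). In short: correct proof, no counterpart in the paper, and it recovers the exact expectation behind the cited $\Omega(n^2)$ bound.
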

Note that scaling all profits does not change the Pareto set and hence Theorem~\ref{bi.thm.n.squared} remains true if the profits are chosen uniformly from~$[0, a]$ for an arbitrary~$a > 0$. We will exploit this observation later in our construction.

The idea how to create a large Pareto set is what we call the copy step. Let us assume we have an additional object~$b$ with weight~$2^{n+1}$ and fixed profit~$q$. The solutions from $K(\{ \LIST{a}{n}, b \})$ can be considered as solutions from $K(\{ \LIST{a}{n} \})$ that do not use object~$b$ or as solutions from $K(\{ \LIST{a}{n} \})$ that additionally use object~$b$. By the choice of the weight of~$b$, a Pareto optimal solution from $K(\{ \LIST{a}{n} \})$ is also a Pareto optimal solution from $K(\{ \LIST{a}{n}, b\})$ as object~$b$ alone is heavier than all objects $\LIST{a}{n}$ together. The crucial observation is that a solution that uses object~$b$ is Pareto optimal if and only if its profit is larger than the largest profit of any Pareto optimal solution from $K(\{ \LIST{a}{n} \})$ and if it is Pareto optimal for $K(\{ \LIST{a}{n} \})$ when not using~$b$. The first condition is always fulfilled if we choose the profit~$q$ large enough. In this case we can view the Pareto optimal solutions using object~$b$ as copies of the Pareto optimal solutions that do not use~$b$.
\begin{lemma}
\label{bi.lemma.copy}
Let $\LIST{a}{n}$ be objects with weights $\SLIST{2^}{n}$ and profits $\LIST{p}{n} \geq 0$ and let~$b$ be an object with weight~$2^{n+1}$ and profit~$q > \sum_{i=1}^n p_i$. Furthermore, let~$\P$ denote the Pareto set of $K(\{ \LIST{a}{n} \})$ and let~$\P'$ denote the Pareto set of $K(\{ \LIST{a}{n}, b\})$. Then,~$\P'$ is the disjoint union of $\P'_0 := \SET{ (s, 0) \WHERE s \in \P }$ and $\P'_1 := \SET{ (s, 1) \WHERE s \in \P }$ and thus $|\P'| = 2\cdot|\P|$.
\end{lemma}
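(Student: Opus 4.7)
The plan is to establish the set equality $\P' = \P'_0 \cup \P'_1$; disjointness of $\P'_0$ and $\P'_1$ is immediate because their elements have different last coordinate, so the size claim $|\P'| = |\P'_0| + |\P'_1| = 2|\P|$ will follow at once. I would split the argument into three short inclusions.

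For $\P'_0 \subseteq \P'$, given $s \in \P$, I would suppose for contradiction that $(s,0)$ is dominated by some $(s', b')$ in the extended instance. If $b' = 1$, then the dominator's weight is at least $2^{n+1}$, which strictly exceeds $\sum_{i=1}^n 2^i = 2^{n+1}-2$, the largest possible weight of any $(t,0)$ solution; hence domination on weight is impossible. So $b' = 0$, and then $s'$ dominates $s$ in $K(\{\LIST{a}{n}\})$, contradicting $s \in \P$. This is where the geometric weight structure $\SLIST{2^}{n}$ enters.

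For $\P'_1 \subseteq \P'$, given $s \in \P$, I would suppose $(s,1)$ is dominated by $(s',b')$. If $b' = 1$, cancelling the common weight $2^{n+1}$ and the common profit $q$ from both sides reduces the domination to $s'$ dominating $s$ in $K(\{\LIST{a}{n}\})$, contradicting $s \in \P$. If $b' = 0$, then the profit-domination requirement reads $\sum_{i=1}^n p_i s'_i \geq q + \sum_{i=1}^n p_i s_i$, but by the hypothesis $q > \sum_{i=1}^n p_i \geq \sum_{i=1}^n p_i s'_i$, which is impossible. This is where the assumption $q > \sum_{i=1}^n p_i$ is crucial.

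For the reverse inclusion $\P' \subseteq \P'_0 \cup \P'_1$, I would show that if $(s,b) \in \P'$ then $s \in \P$. Otherwise some $s'$ dominates $s$ in $K(\{\LIST{a}{n}\})$, and then $(s',b)$ would dominate $(s,b)$ in the extended instance (the common $b$-contributions to weight and profit cancel), contradicting $(s,b) \in \P'$. Each of the three inclusions is a one-line case analysis, so I do not expect a real obstacle; the only subtlety worth flagging is pinpointing which hypothesis rules out each type of cross-layer domination, namely the geometric weight bound for $b' = 1$ dominating $b = 0$, and the bound on $q$ for $b' = 0$ dominating $b = 1$.
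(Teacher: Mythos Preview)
Your proof is correct and follows precisely the approach the paper sketches informally (via Figure~\ref{bi.fig.copy.step} and the surrounding paragraph): the paper does not give a formal proof of this lemma but only notes that the solutions using~$b$ are a shift of those not using~$b$ by $(2^{n+1}, q)$, with both components large enough to preclude cross-domination. Your three-inclusion case analysis is exactly the rigorous verification of that picture, and you correctly identify which hypothesis (the geometric weights for one direction, the bound on~$q$ for the other) rules out each cross-layer domination.
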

Figure~\ref{bi.fig.copy.step} visualizes the proof idea. If we represent all solutions by a weight-profit pair in the weight-profit space, then the set of solutions using object~$b$ is the set of solutions that do not use object~$b$, but shifted by~$(2^{n+1}, q)$. As both components of this vector are chosen sufficiently large, there is no domination between solutions from different copies and hence the Pareto optimal solutions of $K(\{ \LIST{a}{n}, b \})$ are just the copies of the Pareto optimal solutions of $K(\{ \LIST{a}{n} \})$.
\begin{figure}[t]
  \begin{center}
    \includegraphics[width=0.5\textwidth]{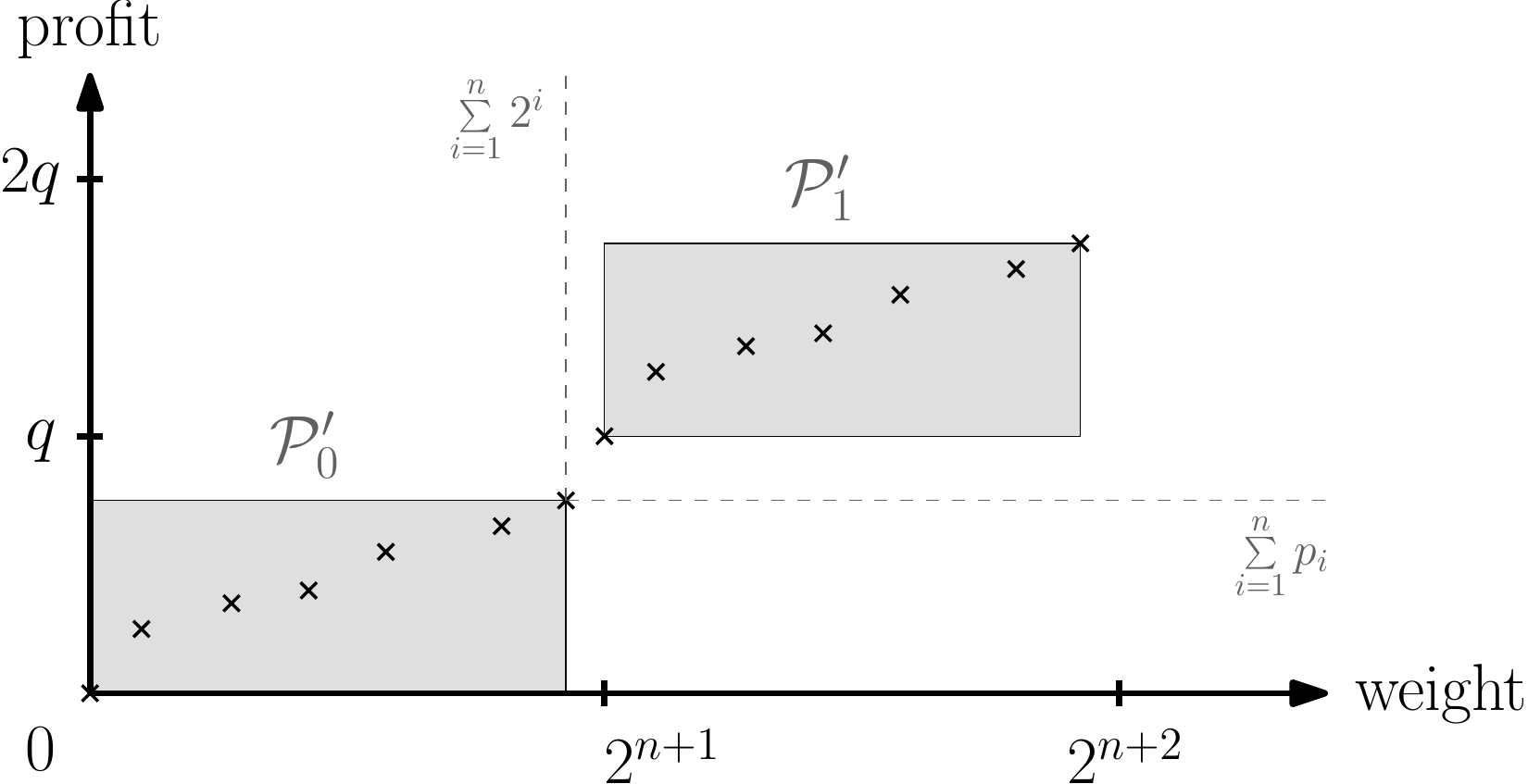}
  \end{center}
  \caption{The copy step. The Pareto set $\P'$ consist of two copies of the Pareto set $\P$.}
  \label{bi.fig.copy.step}
\end{figure}
Now we use the copy idea to construct a large Pareto set. Let $\LIST{a}{n_p}$ be objects with weights $\SLIST{2^}{n_p}$ and with profits $\LIST{p}{n_p} \in P := [0, \frac{1}{\phi}]$ where~$\phi > 1$, and let $\LIST{b}{n_q}$ be objects with weights $\SLIST[n_p+1]{2^}{n_p+n_q}$ and with profits
\[ q_i \in Q_i := \left( m_i - \frac{\ceil{m_i}}{\phi}, m_i \right] \KOMMA \mbox{ where } m_i = \frac{n_p+1}{\phi-1} \cdot \left( \frac{2\phi-1}{\phi-1} \right)^{i-1} \DOT \]
To apply Lemma~\ref{bi.lemma.copy}, we first have to show that we chose the intervals~$Q_i$ appropriately. Additionally, we implicitely show that the lower boundaries of the intervals~$Q_i$ are non-negative.
\begin{lemma}
\label{bi.lemma.appr.Qi}
Let $\LIST{p}{n_p} \in P$ and let $q_i \in Q_i$. Then, $q_i > \sum_{j=1}^{n_p} p_j + \sum_{j=1}^{i-1} q_j$ for all $i = \NLIST{n_q}$.
\end{lemma}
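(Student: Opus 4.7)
The plan is to reduce the claim to a single closed-form inequality in $m_i$ and $\lceil m_i\rceil$ and then dispatch it by an elementary estimate. Since $p_j \leq 1/\phi$ and $q_j \leq m_j$ for every $j$, and since $q_i$ strictly exceeds the left endpoint $m_i - \lceil m_i\rceil/\phi$ of $Q_i$, it suffices to prove
\[ m_i - \frac{\lceil m_i\rceil}{\phi} > \frac{n_p}{\phi} + \sum_{j=1}^{i-1} m_j \DOT \]

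The next step is to evaluate the geometric sum in closed form. Writing $r := (2\phi-1)/(\phi-1)$, the convenient identity $r-1 = \phi/(\phi-1)$ gives
\[ \sum_{j=1}^{i-1} m_j \;=\; \frac{n_p+1}{\phi-1}\cdot\frac{r^{i-1}-1}{r-1} \;=\; \frac{n_p+1}{\phi}\bigl(r^{i-1}-1\bigr) \DOT \]
Substituting this together with $m_i = \frac{n_p+1}{\phi-1}\cdot r^{i-1}$ into the target inequality, multiplying through by $\phi$, and collecting terms, the $r^{i-1}$ contributions collapse via $\phi/(\phi-1)-1 = 1/(\phi-1)$ to exactly $m_i$. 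What remains is the trivial inequality $m_i > \lceil m_i\rceil - 1$, which holds for every real number.

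The main obstacle is keeping the strictness straight: the lower endpoint of $Q_i$ is excluded while the upper endpoints of $P$ and $Q_j$ are included, and the final step relies on the strict bound $\lceil m_i\rceil < m_i+1$. Chasing each inequality in the chain carefully yields the strict bound required by the lemma. As a free byproduct, the right-hand side $n_p/\phi + \sum_{j=1}^{i-1} m_j$ is non-negative (indeed $m_j > 0$ since $\phi > 1$ and $n_p+1 > 0$), so the same chain delivers $m_i - \lceil m_i\rceil/\phi > 0$, which is exactly the implicit claim that each interval $Q_i$ lies in $[0,\infty)$ and may therefore serve as the support of a probability density bounded by $\phi$.
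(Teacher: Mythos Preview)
Your proof is correct and follows essentially the same approach as the paper: bound $p_j\le 1/\phi$, $q_j\le m_j$, use the open lower endpoint of $Q_i$, and evaluate the geometric sum $\sum_{j<i}m_j$ via $r-1=\phi/(\phi-1)$. The only cosmetic difference is that the paper applies the ceiling estimate $\lceil m_i\rceil \le m_i+1$ up front to simplify the lower bound on $q_i$, arriving at the common value $\frac{n_p+1}{\phi}r^{i-1}-\frac1\phi$ on both sides, whereas you keep $\lceil m_i\rceil$ symbolic and apply the (equivalent) strict form $m_i>\lceil m_i\rceil-1$ at the very end.
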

\begin{proof}
Using the definition of~$m_i$, we get
\begin{align*}
q_i &> m_i - \frac{\ceil{m_i}}{\phi} \geq m_i - \frac{m_i+1}{\phi} = \frac{\phi-1}{\phi} \cdot m_i - \frac{1}{\phi} = \frac{n_p+1}{\phi} \cdot \left( \frac{2\phi-1}{\phi-1} \right)^{i-1} - \frac{1}{\phi}
\end{align*}
and
\begin{align*}
\sum \limits_{j=1}^{n_p} p_j + \sum \limits_{j=1}^{i-1} q_j &\leq \sum \limits_{j=1}^{n_p} \frac{1}{\phi} + \sum \limits_{j=1}^{i-1} m_j = \frac{n_p}{\phi} + \sum \limits_{j=1}^{i-1} \frac{n_p+1}{\phi-1} \cdot \left( \frac{2\phi-1}{\phi-1} \right)^{j-1} \cr
&= \frac{n_p}{\phi} + \frac{n_p+1}{\phi-1} \cdot \frac{\left( \frac{2\phi-1}{\phi-1} \right)^{i-1} - 1}{\frac{2\phi-1}{\phi-1} - 1} = \frac{n_p}{\phi} + \frac{n_p+1}{\phi} \cdot \left( \left( \frac{2\phi-1}{\phi-1} \right)^{i-1} - 1 \right) \cr
&= \frac{n_p+1}{\phi} \cdot \left( \frac{2\phi-1}{\phi-1} \right)^{i-1} - \frac{1}{\phi} \DOT \qedhere
\end{align*}
\end{proof}
Combining Theorem~\ref{bi.thm.n.squared}, Lemma~\ref{bi.lemma.copy} and Lemma~\ref{bi.lemma.appr.Qi}, we immediately get a lower bound for the knapsack problem using the objects $\LIST{a}{n_p}$ and $\LIST{b}{n_q}$ with profits chosen from~$P$ and~$Q_i$, respectively.
\begin{corollary}
\label{bi.corol.pareto}
Let $\LIST{a}{n_p}$ and $\LIST{b}{n_q}$ be as above, but the profits~$p_i$ are chosen uniformly from~$P$ and the profits~$q_i$ are arbitrarily chosen from~$Q_i$. Then, the expected number of Pareto optimal solutions of $K(\{ \LIST{a}{n_p}, \LIST{b}{n_q} \})$ is $\OMEGA{n_p^2 \cdot 2^{n_q}}$.
\end{corollary}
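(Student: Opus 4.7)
The plan is to combine the three preceding results in a straightforward inductive fashion. First, I would note that by Theorem~\ref{bi.thm.n.squared} together with the scaling remark that follows it, the expected number of Pareto optimal solutions of $K(\{\LIST{a}{n_p}\})$ (with profits drawn uniformly from $P = [0, 1/\phi]$) is $\OMEGA{n_p^2}$. The goal is then to show that adding each object $b_i$ exactly doubles the Pareto set, pointwise over realizations of the random profits, so that after adding all $n_q$ of them the count is multiplied by $2^{n_q}$.

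Formally, I would proceed by induction on $i \in \SET{0, 1, \ldots, n_q}$, showing that for every realization of $\LIST{p}{n_p} \in P$ and every choice of $q_j \in Q_j$, the Pareto set of $K(\{\LIST{a}{n_p}, \LIST{b}{i}\})$ has cardinality exactly $2^i \cdot |\P|$, where $\P$ is the Pareto set of $K(\{\LIST{a}{n_p}\})$. To pass from $i-1$ to $i$, I would apply Lemma~\ref{bi.lemma.copy} with the objects $\LIST{a}{n_p}, \LIST{b}{i-1}$ (which have weights $2^1, \ldots, 2^{n_p + i - 1}$, hence fit the weight pattern of the lemma with the role of $n$ played by $n_p + i - 1$) and the new object $b_i$ of weight $2^{n_p + i} = 2^{(n_p + i - 1) + 1}$, which matches the required weight in the lemma.

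The only nontrivial hypothesis of Lemma~\ref{bi.lemma.copy} to verify at step $i$ is that the profit $q_i$ of $b_i$ strictly exceeds the total profit of all preceding objects, i.e.\ $q_i > \sum_{j=1}^{n_p} p_j + \sum_{j=1}^{i-1} q_j$. But this is exactly the content of Lemma~\ref{bi.lemma.appr.Qi}, so the inductive step goes through. As a byproduct, the inductive hypothesis of Lemma~\ref{bi.lemma.appr.Qi} also implicitly ensures $q_i > 0$, so the intervals $Q_i$ are well defined and all profits are nonnegative, as needed for Lemma~\ref{bi.lemma.copy}.

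Finally, taking expectations in the pointwise identity $|\P'| = 2^{n_q} \cdot |\P|$ and using the first step yields
\[ \Ex{|\P'|} = 2^{n_q} \cdot \Ex{|\P|} = 2^{n_q} \cdot \OMEGA{n_p^2} = \OMEGA{n_p^2 \cdot 2^{n_q}} \DOT \]
There is no real obstacle here; the proof is essentially a bookkeeping argument once Theorem~\ref{bi.thm.n.squared}, Lemma~\ref{bi.lemma.copy}, and Lemma~\ref{bi.lemma.appr.Qi} are in hand. The only point requiring a little care is matching the weight indices so that Lemma~\ref{bi.lemma.copy} really applies at every step, and observing that the doubling holds deterministically (not just in expectation), which is what lets the factor $2^{n_q}$ pull out cleanly.
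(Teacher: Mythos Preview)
Your proposal is correct and follows the same approach as the paper's proof: use Lemma~\ref{bi.lemma.appr.Qi} to verify the profit hypothesis of Lemma~\ref{bi.lemma.copy} for every realization, apply the copy lemma $n_q$ times to obtain the deterministic factor $2^{n_q}$, and then invoke Theorem~\ref{bi.thm.n.squared} (with the scaling remark) for the $\OMEGA{n_p^2}$ base. The paper's version is just a terser statement of exactly this argument, without writing out the induction explicitly.
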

\begin{proof}
Because of Lemma~\ref{bi.lemma.appr.Qi}, we can apply Lemma~\ref{bi.lemma.copy} for each realization of the profits $\LIST{p}{n_p}$ and $\LIST{q}{n_q}$. This implies that the expected number of Pareto optimal solutions is~$2^{n_q}$ times the expected size of the Pareto set of $K(\{ \LIST{a}{n_p} \})$ which is $\OMEGA{n_p^2}$ according to Theorem~\ref{bi.thm.n.squared}.
\end{proof}
The profits of the objects~$b_i$ grow exponentially and leave the interval~$[0, 1]$. We resolve this problem by splitting each object~$b_i$ into~$k_i := \ceil{m_i}$ objects $b_i^{(1)},  \ldots, b_i^{(k_i)}$ with the same total weight and the same total profit, i.e.\ with weight~$2^{n_p+i}/k_i$ and profit
\[ q_i^{(l)} \in Q_i/k_i := \left( \frac{m_i}{k_i} - \frac{1}{\phi}, \frac{m_i}{k_i} \right] \DOT \]
As the intervals~$Q_i$ are subsets of~$\R_+$, the intervals~$Q_i/k_i$ are subsets of~$[0, 1]$. It remains to ensure that for any fixed~$i$ all objects~$b_i^{(l)}$ are treated as a group. This can be done by restricting the set~$\S$ of solutions. Let $\S_i = \SET{ (0, \ldots, 0), (1, \ldots, 1) } \subseteq \SET{ 0, 1 }^{k_i}$. Then, the set~$\S$ of solutions is defined as
\[ \S := \SET{ 0, 1 }^{n_p} \times \prod \limits_{i=1}^{n_q} \S_i \DOT \]
By choosing the set of solutions that way, the objects $b_i^{(1)}, \ldots, b_i^{(k_i)}$ can be viewed as substitute for object~$b_i$. Thus, a direct consequence of Corollary~\ref{bi.corol.pareto} is the following.
\begin{corollary}
\label{bi.corol.final.pareto}
Let~$\S$, $\LIST{a}{n_p}$ and $b_i^{(l)}$ be as above, and let the profits $\LIST{p}{n_p}$ be chosen uniformly from~$P$ and let the profits $q_i^{(1)}, \ldots, q_i^{(k_i)}$ be chosen uniformly from~$Q_i / k_i$. Then, the expected number of Pareto optimal solutions of $K_\S (\{ \LIST{a}{n_p} \} \cup \{ b_i^{(l)} \WHERE i = \NLIST{n_q}, \ l = \NLIST{k_i} \})$ is $\OMEGA{n_p^2 \cdot 2^{n_q}}$.
\end{corollary}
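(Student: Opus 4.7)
My plan is to reduce the situation to Corollary~\ref{bi.corol.pareto} by treating each group $\{b_i^{(1)}, \ldots, b_i^{(k_i)}\}$ as a single ``merged'' object. The restriction imposed by $\S_i = \SET{(0,\ldots,0),(1,\ldots,1)}$ means that in any feasible solution either all of $b_i^{(1)},\ldots,b_i^{(k_i)}$ are taken or none of them, so from the perspective of the weight-profit vector each group behaves exactly like a single object of weight $k_i \cdot (2^{n_p+i}/k_i) = 2^{n_p+i}$ and profit $q_i := \sum_{l=1}^{k_i} q_i^{(l)}$. Hence the set of feasible weight-profit pairs attainable in $K_\S$ is in bijection with the set of weight-profit pairs for an instance of $K(\{\LIST{a}{n_p},\LIST{b}{n_q}\})$ in which $b_i$ has weight $2^{n_p+i}$ and profit $q_i$, and this bijection preserves Pareto optimality.

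The key verification is that every realization of the $q_i^{(l)}$ yields $q_i \in Q_i$, so that the hypothesis of Corollary~\ref{bi.corol.pareto} is satisfied pointwise. Summing the $k_i$ inequalities $m_i/k_i - 1/\phi < q_i^{(l)} \le m_i/k_i$ gives
\[
q_i \in \left(m_i - \frac{k_i}{\phi},\, m_i\right] = \left(m_i - \frac{\ceil{m_i}}{\phi},\, m_i\right] = Q_i \KOMMA
\]
using $k_i = \ceil{m_i}$. This is the same interval used to define $Q_i$, so the merged objects fit exactly into the setup of Corollary~\ref{bi.corol.pareto}.

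Once the two facts above are in place, the conclusion is immediate: for every fixed realization of the $p_i$ and $q_i^{(l)}$, the number of Pareto optimal solutions of $K_\S$ equals the number of Pareto optimal solutions of the associated merged instance $K(\{\LIST{a}{n_p},\LIST{b}{n_q}\})$, and Corollary~\ref{bi.corol.pareto} (which only requires the $q_i$ to lie in $Q_i$, not to be uniformly distributed) tells us that the expected value of the latter is $\OMEGA{n_p^2 \cdot 2^{n_q}}$. Taking expectation over the joint distribution of all the $p_i$ and $q_i^{(l)}$ then yields the desired bound.

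I expect no real obstacle here; the only point requiring care is verifying that merging groups preserves Pareto optimality (which follows because two solutions from $\S$ have the same total weight and profit if and only if the corresponding merged solutions do), and the arithmetic $k_i/\phi = \ceil{m_i}/\phi$ that aligns the sum-interval with $Q_i$. The rest is essentially bookkeeping.
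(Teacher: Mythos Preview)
Your proposal is correct and follows exactly the approach the paper takes: the paper simply states that, by the choice of~$\S$, the objects $b_i^{(1)},\ldots,b_i^{(k_i)}$ act as a substitute for the single object~$b_i$, making the result a ``direct consequence of Corollary~\ref{bi.corol.pareto}.'' You have merely spelled out the two points the paper leaves implicit---that the merged weight equals~$2^{n_p+i}$ and that $\sum_l q_i^{(l)} \in Q_i$ since $k_i = \ceil{m_i}$---so nothing further is needed.
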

The remainder contains just some technical details. First, we give an upper bound for the number of objects~$b_i^{(l)}$.
\begin{lemma}
\label{bi.lemma.count.objects}
The number of objects~$b_i^{(l)}$ is upper bounded by $n_q + \frac{n_p+1}{\phi} \cdot \left( \frac{2\phi-1}{\phi-1} \right)^{n_q}$.
\end{lemma}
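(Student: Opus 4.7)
The plan is to directly bound the total count $\sum_{i=1}^{n_q} k_i = \sum_{i=1}^{n_q} \lceil m_i \rceil$ by replacing each ceiling with the crude estimate $\lceil m_i\rceil \le m_i + 1$. This gives $n_q$ from the $+1$ terms and leaves a geometric sum $\sum_{i=1}^{n_q} m_i$ to evaluate in closed form.

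Substituting the definition $m_i = \frac{n_p+1}{\phi-1}\cdot\left(\frac{2\phi-1}{\phi-1}\right)^{i-1}$, I would pull the constant prefactor out of the sum and apply the finite geometric series formula $\sum_{i=1}^{n_q} r^{i-1} = (r^{n_q}-1)/(r-1)$ with ratio $r = \frac{2\phi-1}{\phi-1}$. The key simplification is that $r - 1 = \frac{\phi}{\phi-1}$, so the factor $\frac{1}{\phi-1}$ out front cancels nicely against the denominator of the geometric sum and leaves an overall factor of $\frac{n_p+1}{\phi}$ — exactly what appears in the claimed bound. This mirrors the telescoping already carried out inside the proof of Lemma~\ref{bi.lemma.appr.Qi}, so no new manipulation is really required.

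At this point I will have shown
\[
\sum_{i=1}^{n_q} k_i \;\le\; n_q + \frac{n_p+1}{\phi}\cdot\left(\left(\frac{2\phi-1}{\phi-1}\right)^{n_q} - 1\right),
\]
and the claimed bound follows by discarding the subtracted $-1$ (and the accompanying positive prefactor), since it only makes the right-hand side smaller. There is no real obstacle here; the only thing to be slightly careful about is that the estimate $\lceil m_i\rceil \le m_i + 1$ is valid for all real $m_i$, and that $\frac{2\phi-1}{\phi-1} > 1$ for $\phi > 1$, so the geometric-sum manipulation is well-defined and all quantities being discarded are nonnegative.
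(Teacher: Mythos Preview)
Your proposal is correct and essentially identical to the paper's proof: both use $\lceil m_i\rceil \le m_i + 1$ to extract the $n_q$ term, then evaluate the resulting geometric sum and simplify via $r-1 = \frac{\phi}{\phi-1}$. The only cosmetic difference is that the paper drops the $-1$ inside the geometric-series formula (writing $\sum_{i=1}^{n_q} r^{i-1} \le r^{n_q}/(r-1)$) whereas you carry the exact sum and discard the $-1$ at the end; the effect is the same.
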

\begin{proof}
The number of objects~$b_i^{(l)}$ is $\sum_{i=1}^{n_q} k_i = \sum_{i=1}^{n_q} \ceil{m_i} \leq n_q + \sum_{i=1}^{n_q} m_i$, and
\[ \sum \limits_{i=1}^{n_q} m_i = \frac{n_p+1}{\phi-1} \cdot \sum \limits_{i=1}^{n_q} \left( \frac{2\phi-1}{\phi-1} \right)^{i-1} \leq \frac{n_p+1}{\phi-1} \cdot \frac{\left( \frac{2\phi-1}{\phi-1} \right)^{n_q}}{\frac{2\phi-1}{\phi-1} - 1} = \frac{n_p+1}{\phi} \cdot \left( \frac{2\phi-1}{\phi-1} \right)^{n_q} \DOT \qedhere \]
\end{proof}
Now we are able to prove Theorem~\ref{bi.mainthm}.
\begin{proof}[Proof of Theorem~\ref{bi.mainthm}]
Without loss of generality let~$n \geq 4$ and $\phi \geq \frac{3+\sqrt{5}}{2} \approx 2.62$. For the moment let us assume $\phi \leq (\frac{2\phi-1}{\phi-1})^\frac{n-1}{3}$. This is the interesting case leading to the first term in the minimum in Theorem~\ref{bi.mainthm}. We set $\hat{n}_q := \frac{\LOG{\phi}}{\LOG{ \frac{2\phi-1}{\phi-1} }} \in [ 1, \frac{n-1}{3}]$ and $\hat{n}_p := \frac{n-1-\hat{n}_q}{2} \geq \frac{n-1}{3} \geq 1$. All inequalities hold because of the bounds on~$n$ and~$\phi$. We obtain the numbers~$n_p$ and~$n_q$ by rounding, i.e.\ $n_p := \floor{\hat{n}_p} \geq 1$ and $n_q := \floor{\hat{n}_q} \geq 1$. Now we consider objects $\LIST{a}{n_p}$ with weights~$2^i$ and profits chosen uniformly from~$P$, and objects~$b_i^{(l)}$, $i = \NLIST{n_q}$, $l = \NLIST{k_i}$, with weights~$2^{n_p+i}/k_i$ and profits chosen uniformly from~$Q_i/k_i$. Observe that~$P$ and all~$Q_i/k_i$ have length~$\frac{1}{\phi}$ and thus the densities of all profits are bounded by~$\phi$. Let~$N$ be the number of all these objects. By Lemma~\ref{bi.lemma.count.objects}, this number is bounded by
\begin{align*}
N &\leq n_p + n_q + \frac{n_p+1}{\phi} \cdot \left( \frac{2\phi-1}{\phi-1} \right)^{n_q} \leq \hat{n}_p + \hat{n}_q + \frac{\hat{n}_p+1}{\phi} \cdot \left( \frac{2\phi-1}{\phi-1} \right)^{\hat{n}_q} \cr
&= \hat{n}_p + \hat{n}_q + \frac{\hat{n}_p+1}{\phi} \cdot \phi = 2\hat{n}_p + \hat{n}_q + 1 = n \DOT
\end{align*}
Hence, the number~$N$ of binary variables we actually use is at most~$n$, as required. As set of solutions we consider $\S := \SET{ 0, 1 }^{n_p} \times \prod \limits_{i=1}^{n_q} \S_i$. Due to Corollary~\ref{bi.corol.final.pareto}, the expected size of the Pareto set of $K_\S(\{ \LIST{a}{n_p}\} \cup \{ b_i^{(l)} \WHERE i = \NLIST{n_q}, \ l = \NLIST{k_i} \})$ is
\begin{align*}
\OMEGA{n_p^2 \cdot 2^{n_q}} &= \OMEGA{\hat{n}_p^2 \cdot 2^{\hat{n}_q}} = \OMEGA{ \hat{n}_p^2 \cdot 2^{\frac{\LOG{\phi}}{\LOG{ \frac{2\phi-1}{\phi-1} }}} } = \OMEGA{ n^2 \cdot \phi^{\frac{\LOG{2}}{\LOG{ \frac{2\phi-1}{\phi-1} }}} } \cr
&= \OMEGA{ n^2 \cdot \phi^{1-\Theta{1/\phi}} } \KOMMA
\end{align*}
where the last step holds because
\[ \frac{1}{\LOG[2]{ 2+\frac{c_1}{\phi-c_2} }} = 1 - \frac{\LOG{ 1+\frac{c_1}{2\phi-2c_2} }}{\LOG{ 2 + \frac{c_1}{\phi-c_2} }} = 1 - \frac{\THETA{ \frac{c_1}{2\phi-2c_2} }}{\THETA{1}} = 1 - \THETA{ \frac{1}{\phi} } \]
for any constants $c_1, c_2 > 0$. We formulated this argument slightly more general than necessary as we will use it again in the multi-criteria case.

In the case $\phi > (\frac{2\phi-1}{\phi-1})^\frac{n-1}{3}$ we construct the same instance as above, but for maximum density~$\phi' > 1$ where $\phi' = (\frac{2\phi'-1}{\phi'-1})^{\frac{n-1}{3}}$. Since~$n \geq 4$, the value~$\phi'$ exists, is unique and $\phi' \in \left[ \frac{3+\sqrt{5}}{2}, \phi \right)$. As above, the expected size of the Pareto set is
\begin{align*}
\OMEGA{ n^2 \cdot 2^{\frac{\LOG{\phi'}}{\LOG{ \frac{2\phi'-1}{\phi'-1} }}} } &= \OMEGA{ n^2 \cdot 2^{\frac{n-1}{3}} } = \OMEGA{ n^2 \cdot 2^{\THETA{n}} } = \OMEGA{ 2^{\THETA{n}} } \DOT \qedhere
\end{align*}
\end{proof}

\section{The Multi-criteria Case}

In this section we present a lower bound for the expected number of Pareto optimal solutions in multi-criteria optimization problems. For this, we construct a class of instances for a variant of the knapsack problem where each object has one weight and~$d$ profits and where objects can form groups. We restrict our attention to~$d \geq 2$ as we discussed the case~$d = 1$ in the previous section.

\begin{theorem}
\label{multi.mainthm}
For any fixed integer~$d \geq 2$ there is a class of instances for the $(d+1)$-dimensional knapsack problem with groups for which the expected number of Pareto-optimal solutions is lower bounded by
\[ \OMEGA{ \MIN{ (n\phi)^{(d-\LOG{d}) \cdot (1 - \THETA{1/\phi})}, 2^{\THETA{n}} } } \KOMMA \]
where~$n$ is the number of objects and~$\phi$ is the maximum density of the profit's probability distributions.
\end{theorem}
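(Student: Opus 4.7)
The plan is to lift the bi-criteria construction of Theorem~\ref{bi.mainthm} to $d$ profit dimensions by combining a $d$-dimensional analogue of the base bound in Theorem~\ref{bi.thm.n.squared} with a coordinate-wise version of the copy step of Lemma~\ref{bi.lemma.copy}. The overall skeleton --- base bound, iterated copy step, group-splitting to preserve the density bound, and final parameter balancing --- transfers essentially intact from the bi-criteria proof; only the base lemma and the number of parallel copy sequences have to be reworked.

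First, I would replace Theorem~\ref{bi.thm.n.squared} by a $d$-dimensional base: an instance on $n_p$ items with weights $2^1,\ldots,2^{n_p}$ and profit vectors drawn uniformly from $[0,1/\phi]^d$ whose Pareto set has expected size $\OMEGA{n_p^{d-\LOG{d}}}$. The natural route is a product/grouping construction that partitions the $n_p$ items into $d-\LOG{d}$ blocks, restricts each block via an all-or-nothing group structure analogous to the $\S_i$ above so that it acts in a single profit coordinate, and invokes Theorem~\ref{bi.thm.n.squared} once per block. The $\LOG{d}$ loss in the exponent reflects the coordination between blocks that is needed to keep the product Pareto set incomparable once the shared weight axis is re-coupled.

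Next, the copy step generalises verbatim: an auxiliary object of weight $2^{n_p+1}$ and profit vector $q\in\R^d$ all of whose coordinates strictly exceed the coordinate-wise maxima of the current Pareto set~$\P$ again yields $\P' = \P'_0\cup\P'_1$ with $|\P'| = 2\,|\P|$. Running the iterated copy construction of Lemma~\ref{bi.lemma.appr.Qi} and Corollary~\ref{bi.corol.final.pareto} coordinate-wise within each of the $d-\LOG{d}$ blocks --- together with the splitting trick that turns the $i$-th auxiliary object into $\lceil m_i\rceil$ group items each with a profit interval of length $1/\phi$ --- multiplies the Pareto set by $2^{(d-\LOG{d})n_q}$ while keeping all profits in $[0,1]$ and all densities below $\phi$.

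Finally, I balance the parameters as in the proof of Theorem~\ref{bi.mainthm}. Setting $\hat n_q = \LOG{\phi}/\LOG{(2\phi-1)/(\phi-1)}$ per block gives $2^{\hat n_q} = \phi^{1-\THETA{1/\phi}}$ via the algebraic identity already established there. Lemma~\ref{bi.lemma.count.objects} bounds each block's item consumption by $n_p+\hat n_q+1$, so choosing $n_p = \THETA{n/(d-\LOG{d})}$ keeps the total item count at $\THETA{n}$ and delivers the combined lower bound
\[ \OMEGA{ n_p^{d-\LOG{d}} \cdot 2^{(d-\LOG{d})\hat n_q} } = \OMEGA{ (n\phi)^{(d-\LOG{d})(1-\THETA{1/\phi})} } \DOT \]
The boundary case $\phi > \bigl((2\phi-1)/(\phi-1)\bigr)^{\THETA{n}}$ is handled by substituting a smaller effective $\phi'$ to surface the $2^{\THETA{n}}$ branch, exactly as in the last paragraph of the bi-criteria proof. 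The main obstacle in executing this plan is the $d$-dimensional base: obtaining $\OMEGA{n_p^{d-\LOG{d}}}$ rather than the trivial $\OMEGA{n_p^2}$ that would follow from ignoring $d-1$ coordinates requires a careful block/group design ensuring that incomparability across blocks is preserved once the weight axis is re-coupled.
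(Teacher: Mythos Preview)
Your plan diverges from the paper's in a crucial way, and the point of divergence is exactly the step you yourself flag as ``the main obstacle''.

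The paper does \emph{not} use any $d$-dimensional base bound. In fact, the authors explicitly say they were unable to find a non-trivial lower bound for the expected Pareto size when all $d$ profits are uniform, and therefore they start the multi-criteria construction from a \emph{single} solution and put all the work into a richer copy step. Their key idea is that one copy step uses $d$ new objects $b_{i,1},\ldots,b_{i,d}$, where $b_{i,j}$ has a large $j^{\text{th}}$ profit and small profits in the other coordinates, and restricts the solution set on these $d$ objects to $\hat\S=\{x\in\{0,1\}^d: H(x)\in\{0,\lceil d/2\rceil\}\}$. Each Hamming-weight-$\lceil d/2\rceil$ vector yields a shifted copy of the current Pareto set, and any two such copies are mutually incomparable because each has a coordinate in which it was shifted and the other was not. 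This produces $1+\binom{d}{\lceil d/2\rceil}\ge 2^d/d$ copies per step, and after $\Theta(\log(n\phi))$ steps one obtains $(2^d/d)^{\Theta(\log(n\phi))}=(n\phi)^{\Theta(d-\log d)}$ Pareto optima. The $(d-\log d)$ exponent thus comes entirely from the \emph{branching factor of the copy step}, not from any base instance.

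Your plan instead tries to extract the $(d-\log d)$ exponent from a base bound $\Omega(n_p^{d-\log d})$ and then applies only the plain doubling copy step. This has two real gaps. First, the base bound is not proved: the sketch ``partition into $d-\log d$ blocks, each acting in a single profit coordinate, and invoke Theorem~\ref{bi.thm.n.squared} per block'' does not go through, because all blocks share the single weight axis, so the Pareto set of the union is not the product of the per-block Pareto sets; your ``$\log d$ loss reflects the coordination between blocks'' is a label for the difficulty, not a resolution of it. Second, and for the same reason, the ``coordinate-wise copy step within each of the $d-\log d$ blocks'' does not obviously multiply the Pareto set by $2^{(d-\log d)n_q}$: once two blocks' copy objects are combined, their weights add on the common axis, and you have to argue incomparability across all $2^{(d-\log d)n_q}$ combinations, which is precisely what the paper's Hamming-weight device accomplishes and what your outline does not supply.
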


Unfortunately, Theorem~\ref{multi.mainthm} does not generalize
Theorem~\ref{bi.mainthm}. This is due to the fact that, though we know an
explicit formula for the expected number of Pareto optimal solutions if all
profits are uniformly chosen from~$[0, 1]$, we were not able to find a simple
non-trivial lower bound for it. Hence, in the general multi-criteria case, we
concentrate on analyzing the copy and split steps.

In the bi-criteria case we used an additional object~$b$ to copy the Pareto set
(see Figure~\ref{bi.fig.copy.step}). For that we had to ensure that every
solution using this object has higher weight than all solutions without~$b$. The
opposite had to hold for the profit. Since all profits are in~$[0,1]$, the
profit of every solution must be in~$[0,n]$. As the Pareto set of the first $n_p \leq n/2$
objects has profits in $[0,n/(2\phi)]$, we could fit $n_q = \THETA{\LOG{\phi}}$ copies of 
this initial Pareto set into the interval~$[0,n]$.

In the multi-criteria case, every solution has a profit in~$[0,n]^d$.
In our construction, the initial Pareto set consists only of a single solution, but
we benefit from the fact that the number of mutually non-dominating copies of the
initial Pareto set that we can fit into the hypercube~$[0,n]^d$ grows quickly with~$d$.

Let us consider the case that we have some Pareto set~$\P$ whose profits lie in some hypercube~$[0,a]^d$.
We will create $\binom{d}{\dt}$ copies of this Pareto set; one for every vector~$x \in \SET{ 0, 1 }^d$
with exactly $\dt = \ceil{d/2}$ ones. Let $x \in \SET{ 0, 1 }^d$ be such a vector. Then we generate the
corresponding copy~$C_x$ of the Pareto set~$\P$ by shifting it by $a + \varepsilon$ in every dimension~$i$ with~$x_i=1$.
If all solutions in these copies have higher weights than the solutions in the initial Pareto set~$\P$,
then the initial Pareto set stays Pareto optimal. Furthermore, for each pair of copies~$C_x$ and~$C_y$,
there is one index~$i$ with~$x_i=1$ and~$y_i=0$. Hence, solutions from~$C_y$ cannot dominate solutions from~$C_x$.
Similarly, one can argue that no solution in the initial copy can dominate any solution from~$C_x$.
This shows that all solutions in copy~$C_x$ are Pareto optimal. All the copies (including the initial one)
have profits in $[0,2a + \varepsilon]^d$ and together $|\P|\cdot \big( 1+\binom{d}{\dt} \big) \geq |\P| \cdot 2^d/d$ solutions.

We start with an initial Pareto set of a single solution with profit in~$[0,1/\phi]^d$, and hence we can
make $\THETA{\LOG{n\phi}}$ copy steps before the hypercube~$[0,n]^d$ is filled. In each of these steps
the number of Pareto optimal solutions increases by a factor of at least~$2^d/d$, yielding a total number
of at least
\[ \left( \frac{2^d}{d} \right)^{\THETA{\LOG{n \phi}}} = (n \phi)^{\THETA{d-\LOG{d}}} \] 
Pareto optimal solutions.

In the following, we describe how these copy steps can be realized in the
restricted multi-profit knapsack problem. Again, we have to make a split step because the profit of every object must be in~$[0,1]^d$. Due to such technicalities, the actual bound we prove looks slightly different than the one above.
It turns out that we need (before splitting)~$d$ new objects
$\LIST{b}{d}$ for each copy step in contrast to the bi-criteria case, where (before splitting) a single
object~$b$ was enough. 

Let~$n_q \geq 1$ be an arbitrary positive integer and let~$\phi \geq 2d$ be a
real. We consider objects~$b_{i,j}$ with weights~$2^i/\dt$ and profit vectors
\[ q_{i,j} \in Q_{i,j} := \prod \limits_{k=1}^{j-1} \left[ 0, \frac{\ceil{m_i}}{\phi} \right] \times \left( m_i - \frac{\ceil{m_i}}{\phi}, m_i \right] \times \prod \limits_{k=j+1}^d \left[ 0, \frac{\ceil{m_i}}{\phi} \right] \KOMMA \]
where~$m_i$ is recursively defined as
\begin{equation}
\label{eq.multi.recurrence}
m_0 := 0 \ \mbox{ and } \ m_i := \frac{1}{\phi - d} \cdot \left( \sum \limits_{l=0}^{i-1} \left( m_l \cdot \left( \phi + d \right) + d \right) \right),\ i = \NLIST{n_q} \DOT
\end{equation}
The explicit formula for this recurrence is
\begin{equation*}
\label{eq.multi.explicit}
m_i = \frac{d}{\phi+d} \cdot \left( \left( \frac{2\phi}{\phi-d} \right)^i - 1 \right), \ i = \NLIST{n_q} \DOT
\end{equation*}
The $d$-dimensional interval~$Q_{i,j}$ is of the form that the $j^\text{th}$~profit of object~$b_{i,j}$ is large and all the other profits are small as discussed in the motivation.

Let~$H(x)$ be the \emph{Hamming weight} of a $0$-$1$-vector~$x$, i.e.\ the number of ones in~$x$, and let $\hat{\S} := \{ x \in \SET{ 0, 1 }^d \WHERE H(x) \in \SET{ 0, \d } \}$ denote the set of all $0$-$1$-vectors of length~$d$ with~$0$ or~$\d$ ones. As set~$\S$ of solutions we consider $\S := \hat{\S}^{n_q}$.
\begin{lemma}
\label{multi.lemma.copy}
Let the set~$\S$ of solutions and the objects~$b_{i,j}$ be as above. Then, each solution~$s \in \S$ is Pareto optimal for $K_\S(\{ b_{i,j} \WHERE i = \NLIST{n_q}, \ j = \NLIST{d} \})$.
\end{lemma}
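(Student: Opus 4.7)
My plan is to fix an arbitrary $s \in \S$ and argue that no $s' \in \S$ with $s' \neq s$ can strictly dominate $s$. Write $s = (s_1, \ldots, s_{n_q})$ and $s' = (s'_1, \ldots, s'_{n_q})$ with $s_i, s'_i \in \hat{\S}$, and let $i^*$ denote the largest index with $s_{i^*} \neq s'_{i^*}$. By the maximality of $i^*$, the contributions of all levels $i > i^*$ to the total weight and to every profit coordinate coincide for $s$ and $s'$, so the analysis reduces to levels $i \leq i^*$. Since each of $s_{i^*}$ and $s'_{i^*}$ lies in $\hat{\S}$ with Hamming weight in $\SET{0, \dt}$, and $s_{i^*} \neq s'_{i^*}$, three sub-cases arise: (i) $H(s_{i^*}) = 0 < \dt = H(s'_{i^*})$; (ii) $H(s_{i^*}) = \dt > 0 = H(s'_{i^*})$; and (iii) both have Hamming weight $\dt$ but the supports $J, J' \subseteq \SET{\NLIST{d}}$ differ.

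In case (i) the standard binary-weight argument (used with weights $2^i$ in the bi-criteria construction) gives $\text{weight}(s') - \text{weight}(s) \geq 2^{i^*} - (2^{i^*} - 1) > 0$, so $s'$ cannot dominate $s$. In cases (ii) and (iii), I will exhibit a coordinate $k$ with $\text{profit}_k(s) > \text{profit}_k(s')$: pick any $k \in J$ in case (ii), and $k \in J \setminus J'$ in case (iii), which is nonempty because $|J| = |J'| = \dt$.

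The proof of the strict profit inequality proceeds by the level-by-level decomposition $\text{profit}_k(s) - \text{profit}_k(s') = \Delta_{i^*} + \sum_{i=1}^{i^*-1} \Delta_i$. The open left endpoint in $Q_{i^*, k}$'s $k$-th factor gives $(q_{i^*, k})_k > m_{i^*} - \ceil{m_{i^*}}/\phi$ strictly, while the fact that $k \notin J'$ in case (iii) bounds each term of $s'$'s level-$i^*$ contribution in coordinate $k$ by $\ceil{m_{i^*}}/\phi$; combined, this yields $\Delta_{i^*} > m_{i^*} - (\dt+1)\ceil{m_{i^*}}/\phi$ in case (iii), with a strictly stronger bound in case (ii). For each lower level $i < i^*$, the worst case $\Delta_i \geq -(m_i + (\dt-1)\ceil{m_i}/\phi)$ is attained when $s_i = 0$ and $s'_i$ is a $\dt$-subset of $\SET{\NLIST{d}}$ containing $k$.

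The crux is then to combine these bounds via the recurrence: using $\ceil{m_l} \leq m_l + 1$ and $\dt \leq d$, one checks $m_l + (\dt-1)\ceil{m_l}/\phi \leq ((\phi+d)m_l + d)/\phi$, so the rearrangement $(\phi - d) m_{i^*} = \sum_{l=0}^{i^*-1}((\phi + d) m_l + d)$ of \eqref{eq.multi.recurrence} directly implies
\[
\sum_{i=1}^{i^*-1}\left(m_i + (\dt - 1)\ceil{m_i}/\phi\right) \leq \frac{(\phi - d) m_{i^*} - d}{\phi} \KOMMA
\]
after subtracting the $l = 0$ term. Plugging this back and using $\ceil{m_{i^*}} \leq m_{i^*} + 1$ gives $\text{profit}_k(s) - \text{profit}_k(s') > (d - \dt - 1)(m_{i^*} + 1)/\phi \geq 0$, where the last inequality uses $\dt = \ceil{d/2} \leq d - 1$ for $d \geq 2$; the strictness is inherited from the open endpoint at level $i^*$ and therefore survives even in the tight cases $d \in \SET{2, 3}$ where the numerical right-hand side vanishes. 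The main obstacle I anticipate is this final matching with the recurrence: the coefficients in~\eqref{eq.multi.recurrence} are apparently calibrated precisely so that the telescoped upper bound on per-level deficits equals $(\phi - d) m_{i^*}/\phi$ minus a small correction, leaving exactly the slack needed for the bound to go through.
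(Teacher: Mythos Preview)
Your proof is correct and follows essentially the same approach as the paper: the paper argues by induction on~$n_q$, which amounts to peeling off the top level until it differs between $s$ and $s'$, while you equivalently jump directly to the largest differing index~$i^*$; in both cases the weight argument handles $H(s_{i^*})=0$, and for the remaining cases one picks a coordinate $k$ in the support of $s_{i^*}$ but not of $s'_{i^*}$ and verifies via the recurrence~\eqref{eq.multi.recurrence} that the level-$i^*$ gain $m_{i^*}-(\dt+1)\ceil{m_{i^*}}/\phi$ strictly exceeds the accumulated deficit $\sum_{i<i^*}(m_i+(\dt-1)\ceil{m_i}/\phi)$. Your final bookkeeping yielding $(d-\dt-1)(m_{i^*}+1)/\phi\geq 0$ with strictness from the open endpoint matches the paper's computation, which arrives at the equivalent inequality $m_{n_q}-\ceil{m_{n_q}}/\phi$ as both the lower bound for $s$ and the upper bound for $s'$.
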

\begin{proof}
We show the statement by induction over~$n_q$ and discuss the base case and the inductive step simultaneously because of similar arguments. Let $\S' := \hat{\S}^{n_q-1}$ and let $(s, s_{n_q}) \in \S' \times \hat{\S}$ be an arbitrary solution from~$\S$. Note that for~$n_q = 1$ we get~$s = \l$, the $0$-$1$-vector of length $0$. First we show that there is no domination within one copy, i.e.\ there is no solution of type $(s', s_{n_q}) \in \S$ that dominates $(s, s_{n_q})$. For~$n_q = 1$ this is obviously true. For~$n_q \geq 2$ the existence of such a solution would imply that~$s'$ dominates~$s$ in the knapsack problem $K_{\S'}(\{ b_{i,j} \WHERE i = \NLIST{n_q - 1}, \ j = \NLIST{d} \})$. This contradicts the inductive hypothesis.

Now we prove that there is no domination between solutions from different copies, i.e.\ there is no solution of type $(s', s'_{n_q}) \in \S$ with $s'_{n_q} \neq s_{n_q}$ that dominates $(s, s_{n_q})$. If $s_{n_q} = \vec{0}$, then the total weight of the solution $(s, s_{n_q})$ is at most $\sum_{i=1}^{n_q-1} 2^i < 2^{n_q}$. The right side of this inequality is a lower bound for the weight of solution $(s', s'_{n_q})$ because $s'_{n_q} \neq s_{n_q}$. Hence, $(s', s'_{n_q})$ does not dominate $(s, s_{n_q})$. Finally, let us consider the case $s_{n_q} \neq \vec{0}$. There must be an index~$j \in [d]$ where $(s_{n_q})_j = 1$ and $(s'_{n_q})_j = 0$. We show that the $j^\text{th}$~total profit of $(s, s_{n_q})$ is higher than the $j^\text{th}$~profit of $(s', s'_{n_q})$. The former one is strictly bounded from below by $m_{n_q} - \ceil{m_{n_q}}/\phi$, whereas the latter one is bounded from above by
\[ \sum \limits_{i=1}^{n_q-1} \left( (\dt-1) \cdot \frac{\ceil{m_i}}{\phi} + \max \SET{ \frac{\ceil{m_i}}{\phi}, m_i } \right) + \dt \cdot \frac{\ceil{m_{n_q}}}{\phi} \DOT \]
Solution $(s', s'_{n_q})$ can use at most~$\dt$ objects of each group $b_{i,1}, \ldots, b_{i,d}$. Each of them, except one, can contribute at most $\frac{\ceil{m_i}}{\phi}$ to the $j^\text{th}$~total profit. One can contribute either at most~$\frac{\ceil{m_i}}{\phi}$ or at  most~$m_i$. This argument also holds for the $n_q^\text{th}$~group, but by the choice of index~$j$ we know that each object chosen by~$s'_{n_q}$ contributes at most~$\frac{\ceil{m_i}}{\phi}$ to the $j^\text{th}$~total profit. It is easy to see that $\ceil{m_i}/\phi \leq m_i$ because of $\phi > d \geq 1$. Hence, our bound simplifies to
\begin{align*}
\sum \limits_{i=1}^{n_q-1} &\left( (\dt-1) \cdot \frac{\ceil{m_i}}{\phi} + m_i \right) + \dt \cdot \frac{\ceil{m_{n_q}}}{\phi} \cr
&\leq \sum \limits_{i=1}^{n_q-1} \left( d \cdot \frac{m_i+1}{\phi} + m_i \right) + (d-1) \cdot \frac{m_{n_q}+1}{\phi} && \mbox{($d \geq 2$)} \cr
&= \frac{1}{\phi} \cdot \left( \sum \limits_{i=1}^{n_q-1} ( m_i \cdot (\phi+d) + d) + d \cdot (m_{n_q} + 1) \right) - \frac{m_{n_q}+1}{\phi} \cr
&= \frac{1}{\phi} \cdot \left( \sum \limits_{i=0}^{n_q-1} ( m_i \cdot (\phi+d) + d) + d \cdot m_{n_q} \right) - \frac{m_{n_q}+1}{\phi} && \mbox{($m_0 = 0$)} \cr
&= \frac{1}{\phi} \cdot ( (\phi-d) \cdot m_{n_q} + d \cdot m_{n_q}) - \frac{m_{n_q}+1}{\phi} && \mbox{(Equation~\eqref{eq.multi.recurrence})} \cr
&\leq m_{n_q} - \frac{\ceil{m_{n_q}}}{\phi} \DOT
\end{align*}
This implies that $(s', s'_{n_q})$ does not dominate $(s, s_{n_q})$.
\end{proof}
Immediately, we get a statement about the expected number of Pareto optimal solutions if we randomize.
\begin{corollary}
\label{multi.corol.pareto}
Let~$\S$ and~$b_{i,j}$ be as above, but the profit vectors~$q_{i,j}$ are arbitrarily drawn from~$Q_{i,j}$. Then, the expected number of Pareto optimal solutions for $K_\S(\{ b_{i,j} \WHERE i = \NLIST{n_q}, \ j = \NLIST{d} \})$ is at least $\left( \frac{2^d}{d} \right)^{n_q}$.
\end{corollary}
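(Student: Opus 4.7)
The plan is to observe that Lemma~\ref{multi.lemma.copy} already does all the work concerning Pareto optimality, and then to reduce the corollary to a purely combinatorial inequality on~$\binom{d}{\dt}$.

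First, I would apply Lemma~\ref{multi.lemma.copy}: for \emph{every} realization of the profit vectors $q_{i,j} \in Q_{i,j}$, each solution $s \in \S$ is Pareto optimal. Hence the number of Pareto optimal solutions is not random at all but deterministically equal to $|\S| = |\hat{\S}|^{n_q}$, and the expectation coincides with this value. A direct count then gives $|\hat{\S}| = 1 + \binom{d}{\dt}$, since $\hat{\S}$ consists of the all-zero vector together with the $\binom{d}{\dt}$ vectors of Hamming weight exactly~$\dt$.

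The only non-routine step is the combinatorial inequality $1 + \binom{d}{\dt} \geq 2^d/d$ for all $d \geq 2$, since raising it to the $n_q$-th power then yields the claimed bound. I would prove it as follows. Among the $d+1$ binomial coefficients $\binom{d}{0}, \ldots, \binom{d}{d}$ summing to $2^d$, the two extreme ones equal $1$, so the remaining $d-1$ coefficients sum to $2^d - 2$; since $\binom{d}{\dt}$ is the largest of these, one gets $\binom{d}{\dt} \geq (2^d - 2)/(d-1)$. A short algebraic manipulation reduces $(2^d - 2)/(d - 1) \geq 2^d/d$ to $2^{d-1} \geq d$, which holds for every $d \geq 1$. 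Combining these inequalities gives $|\hat{\S}| \geq 2^d/d$, and taking the $n_q$-th power concludes the proof.

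I do not expect any real obstacle: the probabilistic content is trivialized by the deterministic statement of Lemma~\ref{multi.lemma.copy}, and the combinatorial bound on the central binomial coefficient is the only place where one has to do a calculation, but it follows from the crude estimate that the maximum of $d-1$ positive numbers is at least their average.
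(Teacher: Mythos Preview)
Your proposal is correct and follows essentially the same route as the paper: invoke Lemma~\ref{multi.lemma.copy} to conclude that every $s\in\S$ is Pareto optimal (so the count is deterministic), compute $|\S|=|\hat\S|^{n_q}$ with $|\hat\S|=1+\binom{d}{\dt}$, and bound the latter by $2^d/d$ via the ``maximum $\ge$ average'' trick on binomial coefficients. The only cosmetic difference is that the paper averages over the $d$ coefficients $\binom{d}{1},\ldots,\binom{d}{d}$ to get $\binom{d}{\dt}\ge(2^d-1)/d$ and then adds~$1$, whereas you average over the $d-1$ middle coefficients and check $(2^d-2)/(d-1)\ge 2^d/d$; both arrive at the same bound with the same idea.
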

\begin{proof}
This result follows from Lemma~\ref{multi.lemma.copy} and the fact \[ |\hat{S}| = 1 + \binom{d}{\dt} \geq 1 + \frac{\sum \limits_{i=1}^d \binom{d}{i}}{d} = 1 + \frac{2^d-1}{d} \geq \frac{2^d}{d} \DOT \qedhere \]
\end{proof}
As in the bi-criteria case we now split each object~$b_{i,j}$ into~$k_i := \ceil{m_i}$ objects $b_{i,j}^{(1)}, \ldots, b_{i,j}^{(k_i)}$ with weights~$2^i/(k_i \cdot \dt)$ and with profit vectors
\[ q_{i,j}^{(l)} \in Q_{i,j}/k_i := \prod \limits_{k=1}^{j-1} \left[ 0, \frac{1}{\phi} \right] \times \left( \frac{m_i}{k_i} - \frac{1}{\phi}, \frac{m_i}{k_i} \right] \times \prod \limits_{k=j+1}^d \left[ 0, \frac{1}{\phi} \right] \DOT \]
Then, we adapt our set~$\S$ of solutions such that for any fixed indices~$i$ and~$j$ either all objects $b_{i,j}^{(1)}, \ldots, b_{i,j}^{(k_i)}$ are put into the knapsack or none of them. Corollary~\ref{multi.corol.pareto} yields the following result.
\begin{corollary}
\label{multi.corol.final.pareto}
Let~$\S$ and~$b_{i,j}^{(l)}$ be as described above, but the profit vectors $p_{i,j}^{(1)}, \ldots, p_{i,j}^{(k_i)}$ are chosen uniformly from~$Q_{i,j}/k_i$. Then, the expected number of Pareto optimal solutions of $K_\S(\{ b_{i,j}^{(l)} \WHERE i = \NLIST{n_q}, \ j = \NLIST{d}, \ l = \NLIST{k_i} \})$ is at least $\left( \frac{2^d}{d} \right)^{n_q}$.
\end{corollary}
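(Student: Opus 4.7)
The plan is to mirror the argument used for Corollary~\ref{bi.corol.final.pareto} by reducing Corollary~\ref{multi.corol.final.pareto} to Corollary~\ref{multi.corol.pareto} via a weight- and profit-preserving bijection between the feasible solutions of the split instance and those of the unsplit instance. Since Corollary~\ref{multi.corol.pareto} already allows the profit vectors $q_{i,j}$ to be chosen arbitrarily from $Q_{i,j}$, once the bijection is in place the bound transfers pointwise (i.e.\ for every realization of the random profits) and then just averages.

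First, I would describe the collapsing map. By the adaptation of $\S$, every feasible split solution is constrained so that, for each pair $(i,j)$, all $k_i$ indicators attached to $b_{i,j}^{(1)}, \ldots, b_{i,j}^{(k_i)}$ share a common value $\sigma_{i,j} \in \SET{0,1}$, and, for each $i$, the resulting vector $(\sigma_{i,1}, \ldots, \sigma_{i,d})$ lies in $\hat{\S}$. This makes the map that sends a split solution to the family $(\sigma_{i,j})_{i,j}$ a bijection with $\hat{\S}^{n_q}$. Because the $k_i$ weights $2^i / (k_i \cdot \dt)$ sum to $2^i / \dt$, the total weight of any split solution equals the total weight of its image under the bijection in the unsplit instance. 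Setting $q_{i,j} := \sum_{l=1}^{k_i} p_{i,j}^{(l)}$, the same coincidence holds coordinate-wise for profits, so Pareto dominations in the two instances are in exact correspondence.

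Next, I would verify that the aggregated profit vector $q_{i,j}$ always lies in $Q_{i,j}$. Its $j$-th coordinate is a sum of $k_i$ numbers chosen from $(m_i/k_i - 1/\phi,\, m_i/k_i]$, which lies in $(m_i - k_i/\phi,\, m_i] = (m_i - \ceil{m_i}/\phi,\, m_i]$ since $k_i = \ceil{m_i}$; each other coordinate is a sum of $k_i$ numbers in $[0, 1/\phi]$, hence in $[0, \ceil{m_i}/\phi]$. These are precisely the factors of $Q_{i,j}$, so the unsplit instance with profits $q_{i,j}$ is admissible for Corollary~\ref{multi.corol.pareto}, which guarantees at least $(2^d/d)^{n_q}$ Pareto optimal solutions. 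Taking expectation over the $p_{i,j}^{(l)}$ preserves the inequality. The argument is essentially mechanical; the only minor subtlety is writing down the adapted $\S$ precisely enough that the bijection with $\hat{\S}^{n_q}$ is unambiguous, and checking that the endpoint arithmetic for the aggregated profits lines up with the definition of $Q_{i,j}$.
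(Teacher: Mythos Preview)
Your proposal is correct and follows exactly the route the paper intends: the paper simply states that Corollary~\ref{multi.corol.pareto} yields the result, and your bijection-plus-aggregation argument is the natural unpacking of that claim, mirroring the passage from Corollary~\ref{bi.corol.pareto} to Corollary~\ref{bi.corol.final.pareto}. The interval arithmetic you check (that the $k_i$-fold sums land in the factors of $Q_{i,j}$ because $k_i = \ceil{m_i}$) is precisely the reason the sets $Q_{i,j}$ were defined with $\ceil{m_i}/\phi$ rather than $m_i/\phi$.
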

Still, the lower bound is expressed in~$n_q$ and not in the number of objects used. So the next step is to analyze the number of objects.
\begin{lemma}
\label{multi.lemma.count.objects}
The number of objects~$b_{i,j}^{(l)}$ is upper bounded by $d \cdot n_q + \frac{2d^2}{\phi-d} \cdot \left( \frac{2\phi}{\phi-d} \right)^{n_q}$.
\end{lemma}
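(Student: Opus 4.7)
The plan is to turn the count into a sum over the groups and then apply the explicit formula for $m_i$ given just after~\eqref{eq.multi.recurrence}. Since for each $i \in \{1,\ldots,n_q\}$ and $j \in \{1,\ldots,d\}$ the object $b_{i,j}$ has been split into $k_i = \ceil{m_i}$ copies, the total number of objects is
\[ \sum_{i=1}^{n_q} \sum_{j=1}^{d} k_i \;=\; d \sum_{i=1}^{n_q} \ceil{m_i} \;\leq\; d \cdot n_q + d \sum_{i=1}^{n_q} m_i, \]
where in the last step I use the trivial bound $\ceil{m_i} \leq m_i + 1$. The first summand is already the desired $d \cdot n_q$, so the entire task reduces to showing $d \sum_{i=1}^{n_q} m_i \leq \frac{2d^2}{\phi-d} \cdot (\frac{2\phi}{\phi-d})^{n_q}$.

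For this I would plug in the explicit formula $m_i = \frac{d}{\phi+d} \cdot \big( (\frac{2\phi}{\phi-d})^i - 1 \big)$ and estimate a geometric series. Setting $r := \frac{2\phi}{\phi-d}$, one has $r-1 = \frac{\phi+d}{\phi-d}$, so
\[ \sum_{i=1}^{n_q} m_i \;\leq\; \frac{d}{\phi+d} \sum_{i=1}^{n_q} r^i \;\leq\; \frac{d}{\phi+d} \cdot \frac{r^{n_q+1}}{r-1} \;=\; \frac{d}{\phi+d} \cdot \frac{2\phi}{\phi+d} \cdot r^{n_q} \;=\; \frac{2d\phi}{(\phi+d)^2} \cdot r^{n_q}. \]
Multiplying by $d$ gives a prefactor of $\frac{2d^2 \phi}{(\phi+d)^2}$, which one then relaxes to $\frac{2d^2}{\phi-d}$ via the elementary inequality $\phi(\phi-d) \leq (\phi+d)^2$ (expanding out, this is equivalent to $0 \leq 3d\phi + d^2$, which is clearly true for $\phi,d > 0$).

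There is no real conceptual obstacle here; the only thing to watch is bookkeeping. The mildly unpleasant step is keeping track of the ratios $\frac{2\phi}{\phi-d}$ and $\frac{\phi+d}{\phi-d}$ through the geometric-series estimate so that the final expression lines up cleanly with the stated bound. Once the explicit formula and the identity $r - 1 = \frac{\phi+d}{\phi-d}$ are in hand, the rest is mechanical.
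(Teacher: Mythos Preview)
Your proof is correct and essentially identical to the paper's: both reduce the count to $d\cdot n_q + d\sum_i m_i$, plug in the explicit formula $m_i \le \tfrac{d}{\phi+d}\,r^i$ with $r=\tfrac{2\phi}{\phi-d}$, bound the geometric sum by $\tfrac{r^{n_q+1}}{r-1}$, and use the identity $r-1=\tfrac{\phi+d}{\phi-d}$ together with $\phi(\phi-d)\le(\phi+d)^2$ to reach $\tfrac{2d^2}{\phi-d}\,r^{n_q}$. The only cosmetic difference is that the paper applies the relaxation $\tfrac{d}{\phi+d}\cdot\tfrac{1}{r-1}\le\tfrac{d}{\phi}$ one step earlier, whereas you carry the exact prefactor $\tfrac{2d\phi}{(\phi+d)^2}$ to the end before relaxing.
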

\begin{proof}
The number of objects~$b_{i,j}^{(l)}$ is $\sum_{i=1}^{n_q} (d \cdot k_i) = d \cdot \sum_{i=1}^{n_q} \ceil{m_i} \leq d \cdot n_q + d \cdot \sum_{i=1}^{n_q} m_i$, and
\begin{align*}
\sum \limits_{i=1}^{n_q} m_i &\leq \frac{d}{\phi+d} \cdot \sum \limits_{i=1}^{n_q} \left( \frac{2\phi}{\phi-d} \right)^i \leq \frac{d}{\phi+d} \cdot \frac{\left( \frac{2\phi}{\phi-d} \right)^{n_q+1}}{\left( \frac{2\phi}{\phi-d} \right) - 1}  \cr
&\leq \frac{d}{\phi} \cdot \left( \frac{2\phi}{\phi-d} \right) \cdot \left( \frac{2\phi}{\phi-d} \right)^{n_q} = \frac{2d}{\phi-d} \cdot \left( \frac{2\phi}{\phi-d} \right)^{n_q} \DOT \qedhere
\end{align*}
\end{proof}
Now we can prove Theorem~\ref{multi.mainthm}.
\begin{proof}[Proof of Theorem~\ref{multi.mainthm}]
Without loss of generality let~$n \geq 16d$ and~$\phi \geq 2d$. For the moment let us assume $\phi - d \leq \frac{4d^2}{n} \cdot \left( \frac{2\phi}{\phi-d} \right)^{\frac{n}{2d}}$. This is the interesting case leading to the first term in the minimum in Theorem~\ref{multi.mainthm}. We set $\hat{n}_q := \frac{\LOG{ (\phi-d) \cdot \frac{n}{4d^2} }}{\LOG{ \frac{2\phi}{\phi-d} }} \in \left[ 1, \frac{n}{2d} \right]$ and obtain $n_q := \floor{\hat{n}_q} \geq 1$ by rounding. All inequalities hold because of the bounds on~$n$ and~$\phi$. Now we consider objects~$b_{i,j}^{(l)}$, $i = \NLIST{n_q}$, $j = \NLIST{d}$, $l = \NLIST{k_i}$, with weights~$2^i/(k_i \cdot d)$ and profit vectors~$q_{i,j}$ chosen uniformly from~$Q_{i,j}/k_i$. All these intervals have length~$\frac{1}{\phi}$ and hence all densities are bounded by~$\phi$. Let~$N$ be the number of objects. By Lemma~\ref{multi.lemma.count.objects}, this number is bounded by
\begin{align*}
N &\leq d \cdot n_q + \frac{2d^2}{\phi-d} \cdot \left( \frac{2\phi}{\phi-d} \right)^{n_q} \leq d \cdot \hat{n}_q + \frac{2d^2}{\phi-d} \cdot \left( \frac{2\phi}{\phi-d} \right)^{\hat{n}_q} \cr
&\leq d \cdot \hat{n}_q + \frac{2d^2}{\phi-d} \cdot (\phi-d) \cdot \frac{n}{4d^2} \leq n \DOT
\end{align*}
Hence, the number~$N$ of binary variables we actually use is at most~$n$, as required. As set~$\S$ of solutions we use the set described above, encoding the copy step and the split step. Due to Corollary~\ref{multi.corol.final.pareto}, for fixed~$d \geq 2$ the expected number of Pareto optimal solutions of $K_\S(\{ b_{i,j}^{(l)} \WHERE i = \NLIST{n_q}, \ j = \SLIST{]}{d}, \ l = \NLIST{k_i} \})$ is
\begin{align*}
\OMEGA{ \left( \frac{2^d}{d} \right)^{n_q} } &= \OMEGA{ \left( \frac{2^d}{d} \right)^{\hat{n}_q} } = \OMEGA{ \left( \frac{2^d}{d} \right)^{\frac{\LOG{ (\phi-d) \cdot \frac{n}{4d^2} }}{\LOG{ \frac{2\phi}{\phi-d} }}} } = \OMEGA{ \left( (\phi-d) \cdot \frac{n}{4d^2} \right)^{\frac{\LOG{ \frac{2^d}{d} }}{\LOG{ \frac{2\phi}{\phi-d} }}} } \cr
&= \OMEGA{ (\phi \cdot n)^{\frac{d - \LOG[2]{d}}{\LOG[2]{ \frac{2\phi}{\phi-d} }}} } = \OMEGA{ (\phi \cdot n)^{(d - \LOG[2]{d}) \cdot (1 - \THETA{1/\phi})} } \KOMMA
\end{align*}
where the last step holds because of the same reason as in the proof of Theorem~\ref{bi.mainthm}.

In the case $\phi - d > \frac{4d^2}{n} \cdot \left( \frac{2\phi}{\phi-d} \right)^{\frac{n}{2d}}$ we construct the same instance above, but for a maximum density~$\phi' > d$ where $\phi' - d = \frac{4d^2}{n} \cdot \left( \frac{2\phi'}{\phi'-d} \right)^{\frac{n}{2d}}$. Since~$n \geq 16d$, the value~$\phi'$ exists, is unique and $\phi' \in [65d, \phi)$. Futhermore, we get $\hat{n}_q = \frac{n}{2d}$. As above, the expected size of the Pareto set is
\[ \OMEGA{ \left( \frac{2^d}{d} \right)^{\hat{n}_q} } = \OMEGA{ \left( \frac{2^d}{d} \right)^{\frac{n}{2d}} } = \OMEGA{ 2^{\THETA{n}} } \DOT \qedhere \]
\end{proof}

\bibliographystyle{alpha}
\bibliography{bibliography}

\begin{thebibliography}{BRV07}

\bibitem[BRV07]{DBLP:conf/ipco/BeierRV07}
Ren{\'e} Beier, Heiko R{\"o}glin, and Berthold V{\"o}cking.
\newblock The smoothed number of {P}areto optimal solutions in bicriteria
  integer optimization.
\newblock In {\em Proc. of the 12th Conference on Integer Programming and
  Combinatorial Optimization (IPCO)}, pages 53--67, 2007.

\bibitem[BV04]{DBLP:journals/jcss/BeierV04}
Ren{\'e} Beier and Berthold V{\"o}cking.
\newblock Random knapsack in expected polynomial time.
\newblock {\em Journal of Computer and System Sciences}, 69(3):306--329, 2004.

\bibitem[MO10]{MoitraO10}
Ankur Moitra and Ryan O'Donnell.
\newblock Pareto optimal solutions for smoothed analysts.
\newblock Technical report, CoRR (abs/1011.2249), 2010.
\newblock \url{http://arxiv.org/abs/1011.2249}.

\bibitem[RT09]{DBLP:conf/focs/RoglinT09}
Heiko R{\"o}glin and Shang-Hua Teng.
\newblock Smoothed analysis of multiobjective optimization.
\newblock In {\em Proc. of the 50th Ann. IEEE Symp. on Foundations of Computer
  Science (FOCS)}, pages 681--690, 2009.

\bibitem[ST04]{DBLP:journals/jacm/SpielmanT04}
Daniel~A. Spielman and Shang-Hua Teng.
\newblock Smoothed analysis of algorithms: Why the simplex algorithm usually
  takes polynomial time.
\newblock {\em Journal of the ACM}, 51(3):385--463, 2004.

\end{thebibliography}

\end{document}